\documentclass[aps,pra,reprint,10pt,a4paper,nofootinbib,showpacs]{revtex4-1}
\usepackage[utf8]{inputenc}
\usepackage{amsfonts}
\usepackage{amssymb}
\usepackage{amsmath}
\usepackage{amsthm}
\usepackage{graphics}
\usepackage{graphicx}
\usepackage{braket}
\usepackage{ulem}
\usepackage[colorlinks=true,linkcolor=blue,urlcolor=blue,citecolor=blue]{hyperref}
\newtheorem{theorem}{Theorem}

\newtheorem{definition}{Definition}
\newtheorem{proposition}{Proposition}

\newtheorem{observation}{Observation}

\begin{document}
\title{Local indistinguishability and incompleteness of entangled orthogonal bases:\\Method to generate two-element locally indistinguishable ensembles}

\author{Saronath Halder}
\affiliation{Harish-Chandra Research Institute, HBNI, Chhatnag Road, Jhunsi, Allahabad 211 019, India}

\author{Ujjwal Sen}
\affiliation{Harish-Chandra Research Institute, HBNI, Chhatnag Road, Jhunsi, Allahabad 211 019, India}

\begin{abstract}
We relate the phenomenon of local indistinguishability of orthogonal states with the properties of unextendibility and uncompletability of entangled bases for bipartite and multipartite quantum systems. We prove that all two-qubit unextendible entangled bases are of size three and they cannot be perfectly distinguished by separable measurements. We identify a method of constructing two-element orthogonal ensembles, based on the concept of unextendible entangled bases, that can potentially lead to information sharing applications. Two-element ensembles form the fundamental unit of ensembles, and yet does not offer locally indistinguishable ensembles for pure state elements. Going over to mixed states does open this possibility, but can be difficult to identify. The method provided using unextendible entangled bases can be used for their systematic generation. In multipartite systems, we find a class of unextendible entangled bases for which the unextendibility property remains conserved across all bipartitions. We also identify nonlocal operations, local implementation of which require entangled resource states from a higher-dimensional quantum system.  
\end{abstract}

\pacs{03.65.Ta, 03.65.Ud, 03.67.Hk, 89.70.-a}
\maketitle

\section{Introduction}\label{sec1}
A composite quantum system, distributed among several spatially separated parties, can exhibit ``nonlocal'' features \cite{Nielsen00}. The tagging of ``nonlocality'' is exercised on a rather broad school of phenomena. There exist certain nonlocal properties for which entanglement is necessary \cite{Brunner14}. Prominent examples of this group are violation of Bell inequalities \cite{Bell01} and quantum teleportation \cite{Bennett93, Collins01}. Interestingly, Bennett {\it et al.} demonstrated a type of nonlocality which can be exhibited by a set of orthogonal product states \cite{Bennett99-1}. Such nonlocality-exhibiting sets of orthogonal product states may or may not be able to form a complete orthogonal product basis. An unextendible product basis \cite{Bennett99, Divincenzo03} is among the examples of such sets, the orthogonal product states within which cannot be appended with any further orthogonal product state, and again, they can exhibit nonlocality. Thus, the states within an unextendible product basis span a subspace of a tensor-product Hilbert space such that the complementary subspace has no product state. 

In this work, the labelling of ``nonlocality'' is applied on the following phenomenon: Given a composite quantum system, we consider that the system is distributed among several spatially separated parties. We also assume that the system is prepared in a state, taken from a known set of orthogonal states. The task is to identify the state of the system under local quantum operations and classical communication. If it is possible to identify the state of the system correctly, then the states of the known set are ``locally distinguishable''. Otherwise, we say that the states of the chosen set are ``locally indistinguishable'', and since they are actually mutually orthogonal, their local indistinguishability is labelled as exhibiting a type of  ``nonlocality''. In case  the state of a given ensemble cannot be perfectly identified, it is natural to attempt a conclusive identification with some nonzero probability \cite{Chefles98, Chefles04, Ji05, Duan07, Walgate08, Bandyopadhyay09, Cohen14}. If even such a conclusive local identification is not possible, the corresponding the ensemble is considered to possess a ``higher'' or ``more'' nonlocality than if it were only deterministic-locally indistinguishable. Let us also  mention here that because of the complex structure of the set of physical maps implementable by local quantum operations and classical communication \cite{Chitambar14}, one sometimes uses separable measurements \cite{Duan09} to learn about necessary conditions of local distinguishability. In Refs.~\cite{Bennett99, Divincenzo03, Rinaldis04}, proofs related to local indistinguishability of states within  unextendible product bases were discussed. The study of the unextendibility property is important to understand the present nonlocal feature of distributed quantum systems. This nonlocal feature is relevant in practical applications, and e.g., it is a key ingredient in  protocols of quantum cryptography such as secret sharing \cite{Markham08}, data hiding \cite{Terhal01, Eggeling02}, etc.

It is probably useful to comment here on the use of the term ``basis'' in this paper. While a typical monograph in linear algebra defines a basis as a collection of vectors that is complete and linearly independent \cite{Simmons04, Gupta16}, we will here use the term ``basis'' even for incomplete sets, in accordance with the practice in the literature. Other such ``off-track'' use of the word, include the ``overcomplete basis'' of coherent states \cite{Mandel95} and ``basis of linearly dependent states'' \cite{Srivastava20}. 

After the discovery of unextendible product bases, the concept of unextendibility was generalized to the case of entangled states also. Unextendible bases using orthogonal maximally entangled states were introduced in Ref.~\cite{Bravyi11} for certain square dimensional systems ($\mathbb{C}^d\otimes\mathbb{C}^d$, $d=3,4$). A maximally entangled state of a bipartite quantum system is a pure state of that system that has the maximal number of Schmidt coefficients for the system, and these coefficients are all equal. For an unextendible maximally entangled basis of a tensor product of two Hilbert spaces, the orthogonal maximally entangled states within that basis span a proper subspace of the considered tensor-product Hilbert space, while the complementary subspace contains no maximally entangled state. Such bases are important to demonstrate the violation of the quantum Birkhoff conjecture \cite{Yu17-1}. Thereafter,  unextendible maximally entangled bases were constructed in Ref.~\cite{Chen13} on nonsquare dimensions ($\mathbb{C}^{d_1}\otimes\mathbb{C}^{d_2}$, $d_1d_2>4$, $d_2/2<d_1<d_2$). In the same paper, the concept of mutually unbiased unextendible maximally entangled bases was introduced. There are many other articles \cite{Li14, Wang14, Nan15, Nizamidin15, Guo16, Zhang16-3, Wang17-1, Zhang18-1, Zhang18-2, Liu18, Song18, Zhao20} which include discussions regarding bipartite unextendible maximally entangled bases. 

Bipartite unextendible entangled bases with fixed Schmidt rank-$k$ were constructed in Refs.~\cite{Guo14, Han18, Shi19, Yong19, Wang19}. For such an entangled basis, the orthogonal states of Schmidt rank-$k$ within that basis span a proper subspace of the considered Hilbert space, while the complementary subspace contains no state whose Schmidt rank is $\geq k$. A different type of bipartite unextendibility for nonmaximally entangled states and their application in a communication protocol were discussed in Refs.~\cite{Chakrabarty12,Chen13-1}. 

For the bipartite case, the labelling of an ``unextendible entangled basis'' is used for a set of mutually orthogonal states on a tensor product of two Hilbert spaces, where the states can be maximally or nonmaximally entangled ones, and they span a proper subspace of the considered tensor-product Hilbert space such that the complementary subspace contains only product states. Moreover, there is no restriction on the Schmidt ranks of the entangled states in an unextendible entangled basis.

Unextendibility for entangled states in {\it multipartite} quantum systems does not have a significant presence in the literature. In Refs.~\cite{Guo15-1, Zhang17-4}, a few multipartite cases were discussed. In Ref.~\cite{Guo15-1}, the authors proved that using the standard Greenberger-Horne-Zeilinger states \cite{Greenberger07, Mermin90}, it is not possible to construct a three-qubit unextendible entangled basis. They further conjectured that this will remain true even when the number of qubits is greater than three. However, they also constructed certain unextendible entangled bases for multipartite systems. In Ref.~\cite{Zhang17-4}, using a different proof technique, it was again shown that there is no unextendible entangled basis in a three-qubit Hilbert space when the states are standard Greenberger-Horne-Zeilinger states. Then the authors provided examples of unextendible entangled bases for higher dimensional tripartite quantum systems.  

The main focus in the existing literature on unextendible entangled bases has been  on different types of constructions. General properties have remained largely unexplored. In fact, in many research works, unextendibility of orthogonal local unitary operators is used to construct unextendible maximally entangled bases. But this technique has its limitations, particularly when one thinks about constructing unextendible entangled bases in multipartite systems. More precisely, in a multipartite system, entanglement has complex structures as there are different types of states such as fully separable states, biseparable states, and the genuinely multipartite entangled states \cite{Horodecki09-1, Guhne09, Das17}. Again, among the genuinely multipartite entangled states, there are inequivalent classes under stochastic local quantum operations and classical communication \cite{Dur00, Verstraete02, Miyake03}. Therefore, if one thinks about constructing an unextendible entangled basis which contains genuine multipartite entangled states from different inequivalent classes, then the technique of using unextendibility of orthogonal local unitary operators does not work. 

In this work, we consider different types of unextendible entangled bases for both bipartite and multipartite systems, and discuss properties of those bases. 

\begin{itemize}
\item[(a)] Particularly, for a two-qubit system, we provide certain constructions of unextendible entangled bases. Thereafter, we show that there is only one type of unextendible entangled base with respect to their cardinality for a two-qubit system, viz., unextendible entangled bases of size three. Furthermore, these bases cannot be perfectly distinguished by separable measurements and therefore, by local quantum operations and classical communication. In this context, it is good to mention that in this work, within all discrimination processes, the given states are equally probable. We also discuss about unextendible entangled bases in higher dimensional bipartite systems. 

\item[(b)] Apart from unextendible entangled bases, we analyze other types of incomplete entangled bases, viz., uncompletable and strongly uncompletable bases considering both maximally and nonmaximally entangled states. Moreover, based on the concept of bipartite unextendible entangled bases, we construct and analyze interesting ensembles, including two-element ensembles of locally indistinguishable orthogonal (mixed) states, which can be potential candidates for information processing (we encode the classical information against the possible states of a system and by identifying the state of the system, we decode that information). 

\item[(c)] For multipartite systems, we first consider a three-qubit system and construct two different unextendible entangled bases. The first basis contains only W states \cite{Zeilinger92, Dur00, Sen(De)03} while the second one contains W states as well as Greenberger-Horne-Zeilinger states. 

\item[(d)] We report that both the bases have an interesting property, viz., the unextendibility property remains conserved across every bipartition. We also mention that this is impossible for any unextendible product basis in a multi-qubit system. 

\item[(e)] Again, the second type of basis leads to a nonlocal operation, to implement which locally, one requires entangled resource states from a higher-dimensional Hilbert space. 

\item[(f)] An important property associated with the second type of basis is that a subset of five states of the basis can show local indistinguishability across every bipartition. 

\item[(g)] We also present an algorithm to construct unextendible entangled bases for any number of qubits, which can lead to nonlocal operations, to implement which locally, entangled resources from higher-dimensional Hilbert spaces are required. 

\item[(h)] We also prove that for three qubits, there are only two types of unextendible entangled bases (which are unextendible across every bipartition) with respect to their cardinalities, viz., unextendible entangled bases of sizes six and seven. 
\end{itemize}

The results regarding bipartite systems are given in Sec.~\ref{sec2}, while those on multiparty systems are given in Sec~\ref{sec3}.  Finally, in Sec.~\ref{sec4}, a conclusion is drawn. Some proofs are consigned to an Appendix.

\section{Bipartite systems}\label{sec2}
It is known that for a two-qubit system, there is no unextendible maximally entangled basis (UMEB) \cite{Bravyi11}. But a two-qubit unextendible entangled basis (UEB) can be constructed. In Ref.~\cite{Guo15-1}, it was shown that starting from a two-qubit UEB, it is possible to construct a three-qubit UEB. 

We first present here two different UEBs for a two-qubit system. Then, we talk about several important properties of those bases. The first one consists of the states 
\begin{equation}\label{eq1}
\begin{array}{l}
\frac{1}{\sqrt{3}}(\ket{00}+\ket{01}+\ket{10}),~
\frac{1}{\sqrt{2}}(\ket{01}-\ket{10}),\\[1.5 ex]
\frac{1}{\sqrt{3}}(\sqrt{2}\ket{00}-\frac{1}{\sqrt{2}}\ket{01}-\frac{1}{\sqrt{2}}\ket{10}).
\end{array}
\end{equation}
In this paper, we use the notation $\ket{v_1v_2\dots v_m}\equiv\ket{v_1}\otimes\ket{v_2}\otimes\dots\otimes\ket{v_m}$ for an $m$-partite quantum state. There is only one two-qubit state which is orthogonal to the above entangled states and that state is $\ket{11}$, which is a product state. This implies that the above entangled states form a UEB. An important feature of the above UEB is that the entangled states are not equally entangled, and in the computational basis,  the coefficients are all real. We now present a UEB which contains equally entangled states. It consists of the states 
\begin{equation}\label{eq2}
\begin{array}{l}
\frac{1}{\sqrt{3}}(\ket{00}+\ket{01}+\ket{10}),\\[1.5 ex]
\frac{1}{\sqrt{3}}(\ket{00}+\omega\ket{01}+\omega^2\ket{10}),\\[1.5 ex]
\frac{1}{\sqrt{3}}(\ket{00}+\omega^2\ket{01}+\omega\ket{10}),
\end{array}
\end{equation}
where $\omega$ is a nonreal cube root of unity. Notice that both the UEBs span the same subspace. But in case of the second basis, the coefficients in the computational basis are complex quantities. 

We next talk about uncompletability of sets of entangled states. The definition of uncompletability for product states was given in Ref.~\cite{Divincenzo03}. Following the same definition, we provide the definition of an uncompletable entangled basis (UCEB).
\begin{definition}\label{def1}
Given a set of orthogonal pure entangled states, we assume that the states span a proper subspace of a tensor-product Hilbert space. If it is possible to find a nonzero number of entangled states in the complementary space, which however are not sufficient to form a complete orthogonal entangled basis of the entire tensor-product  Hilbert space, then the given set is said to be an uncompletable entangled basis.
\end{definition}
Note that the two UEBs presented above has the same first element. Suppose now that we remove this first state from any of the sets.  Then any pure state from the complementary space can be written as a linear combination of the two states, $(1/\sqrt{3})(\ket{00}+\ket{01}+\ket{10})$ and $\ket{11}$. Here, it is possible to construct two orthogonal entangled states, adding which to the remaining two states in any of the above sets, a complete orthogonal entangled basis can be constructed. So, in this case, these two sets of two states do not constitute UCEBs. We will return later to the concept of uncompletability of entangled states again. However, for the two-qubit UEBs, we now present the following proposition.
\begin{proposition}\label{prop1}
Any two-qubit unextendible entangled basis consists of three entangled states, and they  cannot be perfectly distinguished by separable measurements. 
\end{proposition}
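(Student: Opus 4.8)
The plan is to settle the two assertions separately: first the exact cardinality three, and then the impossibility of perfectly discriminating three orthogonal two-qubit entangled states by LOCC.

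\textbf{Cardinality.} I would work from the defining feature of a UEB: its complementary subspace $V$ contains \emph{only} product states. The elementary fact to establish first is that a subspace $V\subseteq\mathbb{C}^2\otimes\mathbb{C}^2$ containing no entangled vector has $\dim V\le 2$, and that $\dim V=2$ forces $V$ to be a ``slice'', $\ket{a}\otimes\mathbb{C}^2$ or $\mathbb{C}^2\otimes\ket{b}$. The proof of this is short: two linearly independent product vectors in $V$ must share a factor on one side (otherwise a linear combination of them is entangled), so $V$ contains a whole slice; if $\dim V\ge 3$ there is a further vector outside this slice, and adding to it a suitable vector from inside the slice yields an entangled element of $V$ — a contradiction. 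Granting this, a UEB of size one is excluded because its complement would be a $3$-dimensional all-product subspace; a UEB of size two is excluded because its complement would be a two-dimensional slice, whence the span of the UEB is the orthogonal slice, which is again all-product, contradicting that the UEB states are entangled. Since the span of a UEB is a proper subspace of a $4$-dimensional space, the size is at most three, hence exactly three (and the examples in Eqs.~\eqref{eq1}--\eqref{eq2} show that size three occurs; in that case the $1$-dimensional complement is necessarily a product state, since otherwise the basis would be extendible). A useful by-product is that, up to local unitaries, every two-qubit UEB spans $(\ket{11})^{\perp}=\mathrm{span}\{\ket{00},\ket{01},\ket{10}\}$.

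\textbf{Local indistinguishability.} Here the one linear-algebraic input is that an entangled two-qubit pure state $\ket{\psi}$ has full-rank marginals on both sides, so $(\bra{\alpha}\otimes I)\ket{\psi}\neq 0$ and $(I\otimes\bra{\beta})\ket{\psi}\neq 0$ for \emph{every} local bra. Assume a perfect LOCC protocol for $\{\ket{\psi_1},\ket{\psi_2},\ket{\psi_3}\}$ exists. Invoking the standard structural reduction — after stripping off rounds that are mere local unitaries (which leave us with three orthogonal entangled states and so change nothing), some party, say Alice, makes the first genuinely nontrivial measurement $\{M_k\}$, and perfect discriminability forces the unnormalized post-measurement states $(M_k\otimes I)\ket{\psi_i}$ (all three of which are nonzero for every outcome, again by entanglement) to remain pairwise orthogonal — I would look at an outcome with $M_k^{\dagger}M_k\not\propto I$, one of which must exist or the measurement is uninformative. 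Writing $M_k^{\dagger}M_k=\lambda_1 I+(\lambda_0-\lambda_1)\ket{\alpha}\bra{\alpha}$ with $\lambda_0\neq\lambda_1$, and using $\langle\psi_i|\psi_j\rangle=0$, the orthogonality conditions collapse to the requirement that the three vectors $(\bra{\alpha}\otimes I)\ket{\psi_i}\in\mathbb{C}^2$ be pairwise orthogonal. By the marginal fact these three vectors are nonzero, and a two-dimensional Hilbert space admits no three nonzero pairwise-orthogonal vectors — a contradiction. The argument is symmetric under interchanging Alice and Bob, so no party can make an informative first move, and perfect LOCC discrimination is impossible.

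The bulk of the work above is routine linear algebra; the step that needs care is the LOCC reduction. The hard part will be to state precisely, and apply correctly, the well-known fact that any LOCC protocol perfectly discriminating orthogonal pure states may be taken to begin — after possible trivial local-unitary rounds — with a local measurement each of whose outcomes preserves mutual orthogonality of the still-possible states, together with the remark that, since local unitaries carry three orthogonal entangled states to three orthogonal entangled states, the contradiction really does bite at the first informative step regardless of how the protocol is organized. Once that is in place, everything reduces to the observation that three nonzero pairwise-orthogonal vectors cannot fit in $\mathbb{C}^2$.
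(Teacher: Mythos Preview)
Your proposal is correct. The cardinality argument is essentially the same as the paper's: both hinge on the observation that a two-dimensional subspace of $\mathbb{C}^2\otimes\mathbb{C}^2$ containing only product vectors must be a ``slice'' $\ket{l}\otimes\mathbb{C}^2$ or $\mathbb{C}^2\otimes\ket{l}$, and hence its orthogonal complement is also such a slice and cannot contain entangled states. Your presentation is somewhat cleaner, first classifying all-product subspaces by dimension and then ruling out cardinalities one and two in turn, whereas the paper argues more informally via the general form $\{\ket{a0},\ket{a1},\ket{b0'},\ket{b1'}\}$ of two-qubit product bases; the content is the same.

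The genuine difference is in the second half. The paper does not prove local indistinguishability at all: it simply invokes the result of Walgate and Hardy \cite{Walgate02} that any three orthogonal pure entangled two-qubit states are LOCC-indistinguishable. You instead reproduce the Walgate--Hardy argument in a self-contained way, reducing to the first informative local measurement and deriving the contradiction that three nonzero pairwise-orthogonal vectors would have to live in $\mathbb{C}^2$. This buys you independence from the cited reference at the cost of having to justify the LOCC structural reduction carefully --- a point you yourself flag as the step needing the most care. Both routes are valid; the paper's is shorter by outsourcing, yours is more transparent.
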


\begin{proof}
In general, a two-qubit orthogonal product basis can be written as $\{\ket{a0}, \ket{a1}, \ket{b0^\prime}, \ket{b1^\prime}\}$, where $\{\ket{a}, \ket{b}\}$, $\{\ket{0}, \ket{1}\}$, and $\{\ket{0^\prime}, \ket{1^\prime}\}$ are different orthogonal bases for a qubit system \cite{Walgate02}. Here, the states $\ket{0^\prime}$, $\ket{1^\prime}$ can be thought of as linear combinations of the orthogonal states $\ket{0}$ and $\ket{1}$, in such a way that $\ket{0^\prime}$, $\ket{1^\prime}$ are orthogonal to each other. Now, we can choose any three product states from the general two-qubit product basis, and taking suitable linear combinations, it might be possible to produce three orthogonal pure entangled states, which form a UEB. Without loss of generality, we can consider the chosen set of product states as $\ket{a0}$, $\ket{a1}$, and $\ket{b0^\prime}$. Now, consider the following three states:
\begin{equation}\label{eq3}
\begin{array}{l}
\frac{1}{\sqrt{3}}(\ket{a0} + \ket{a1} + \ket{b0^\prime}),\\[2 ex]
\frac{1}{\sqrt{3}}(\ket{a0} + \omega\ket{a1} + \omega^2\ket{b0^\prime}),\\[2 ex]
\frac{1}{\sqrt{3}}(\ket{a0} + \omega^2\ket{a1} + \omega\ket{b0^\prime}).
\end{array}
\end{equation}
Clearly, there is only one state left, which is orthogonal to the above three states, and the state is $\ket{b1^\prime}$, a product state. Now, if \(|0^\prime\rangle = |0\rangle\) and \(|1^\prime\rangle = |1\rangle\), the three states in (\ref{eq3}) are entangled, so, they form an unextendible entangled basis. 

Notice that two or less orthogonal pure entangled states in a two-qubit system cannot form a UEB. This is straightforward from the general structure of two-qubit orthogonal product bases. However, we provide here a brief proof. We first consider the case of two pure entangled states and we assume that those states form a UEB. So, there will only be product states in the complementary space. Let us consider two such pure product states, which are orthogonal, in the complementary subspace. By the assumption, any linear combination of the two product states cannot be entangled. So, the product states can be of the forms $\ket{l_1}\ket{l_2}$ and $\ket{l_1}\ket{l_2^\perp}$. In the span of the assumed UEB also, it is possible to think of two product states and both of them must be orthogonal to the states $\ket{l_1}\ket{l_2}$ and $\ket{l_1}\ket{l_2^\perp}$. From the general structure of the two-qubit product basis, it is clear that the product states in the span of the entangled states must have the forms $\ket{l_1^\perp}\ket{l_2^\prime}$ and $\ket{l_1^\perp}\ket{l_2^{\prime\perp}}$. But any linear combinations of $\ket{l_1^\perp}\ket{l_2^\prime}$ and $\ket{l_1^\perp}\ket{l_2^{\prime\perp}}$ cannot produce entangled states. Thus, it contradicts with the assumption that the two entangled states forms a UEB. Following similar arguments, it is also possible to prove that a single entangled state cannot form a UEB.

So, it is quite clear that for a two-qubit system, only a single cardinality is possible for UEBs, and that is three. We now use a criterion from Ref.~\cite{Duan09} which asserts that any three pure orthogonal two-qubit states $\ket{\Phi_1}$, $\ket{\Phi_2}$, $\ket{\Phi_3}$ (irrespective of whether they are entangled or product) cannot be perfectly distinguished by separable measurements if $\sum_{i=1}^3\mathcal{C}(\ket{\Phi_i})\neq\mathcal{C}(\ket{\Phi_4})$, where $\mathcal{C}(\cdot)$ is the concurrence \cite{Wootters98} of its argument and $\ket{\Phi_4}$ is the unique state orthogonal to the states $\ket{\Phi_i}$, $\forall i = 1,2,3$. When $\ket{\Phi_1}$, $\ket{\Phi_2}$, $\ket{\Phi_3}$ form a two-qubit UEB, $\mathcal{C}(\ket{\Phi_4})$ must be zero and $\sum_{i=1}^3\mathcal{C}(\ket{\Phi_i})$ must be nonzero. Thus, the states of a UEB cannot be perfectly distinguished by separable measurements. These complete the proof of the proposition.
\end{proof}

From the above considerations, it is evident that UEBs provide a {\it sufficient} criterion for  indistinguishability of three two-qubit pure entangled states under separable measurements (SEP), viz., if three entangled states form a two-qubit UEB, then the states cannot be perfectly distinguished by SEP. Furthermore, if a set of orthogonal quantum states cannot be perfectly distinguished by SEP, then it cannot be perfectly distinguished by local quantum operations and classical communication (LOCC) \cite{Bennett99-1}.

It is good to stress here that in this work, we only consider orthogonal quantum states, and until now we have considered  perfect distinguishability of such states under LOCC or SEP. However, we are also going to consider  probabilistic distinguishability later but in the conclusive way and under LOCC \cite{Chefles04}. 

We next discuss about UMEBs. In Ref.~\cite{Chen13}, a UMEB is constructed in the minimum nonsquare dimension. The construction shows that it is basically a $2\otimes2$  maximally entangled basis (MEB) which plays the role of a UMEB in $2\otimes3$. [We will henceforth use the notation $d_1\otimes d_2\otimes\dots\otimes d_m$ instead of $\mathbb{C}^{d_1}\otimes\mathbb{C}^{d_2}\otimes\dots\otimes\mathbb{C}^{d_m}$.] But a $2\otimes2$ MEB may not play the role of a UMEB in $2\otimes d$ when $d\geq4$. This can be understood in the following way. Suppose, there is a $2\otimes2$ MEB, given by four states,
\begin{equation}\label{eq4}
\frac{1}{\sqrt{2}}(\ket{00^\prime}\pm\ket{11^\prime}),~\frac{1}{\sqrt{2}}(\ket{01^\prime}\pm\ket{10^\prime}),
\end{equation}
where $\ket{0}$, $\ket{1}$ forms an orthogonal basis for a two-level quantum system on Alice's side and $\ket{0^\prime}$, $\ket{1^\prime}$ forms an orthogonal basis for a two-level quantum system on Bob's side, with Alice and Bob being the observers in possession of the two systems involved. If the extended Hilbert space is $2\otimes4$, then we can consider the orthogonal product states $\ket{0x}$, $\ket{0x^\prime}$, $\ket{1x}$, $\ket{1x^\prime}$, where $\ket{x}$, $\ket{x^\prime}$ along with $\ket{0^\prime}$ and $\ket{1^\prime}$ form a complete orthonormal basis of the four-dimensional side. So, now it is possible to construct the four mutually orthogonal maximally entangled states,
\begin{equation}\label{eq5}
\frac{1}{\sqrt{2}}(\ket{0x}\pm\ket{1x^\prime}),~\frac{1}{\sqrt{2}}(\ket{0x^\prime}\pm\ket{1x}),
\end{equation}
which, along with the states in (\ref{eq4}) form a complete MEB. In general, we have an observation related to the above, given as the following.
\begin{observation}\label{obs1}
Any complete maximally entangled basis in $d\otimes d$ is an unextendible maximally entangled basis in $d\otimes (d+n)$, where $n$ is an integer in \([1,d)\).
\end{observation}
\begin{proof}
For any value of $n$, one can consider the product states $\ket{i}\ket{j}$, where $i=0,1,\dots,(d-1)$ and $j=d,(d+1),\dots,(d+n-1)$. These product states are orthogonal to the states of the given MEB in $d\otimes d$. As long as $n<d$, it is not possible to construct an entangled state of Schmidt rank-$d$ using the product states $\ket{i}\ket{j}$. Therefore, the MEB in $d\otimes d$, behaves like a UMEB in $d\otimes (d+n)$.
\end{proof}

In Definition \ref{def1}, we have described uncompletability for entangled states. We now want to provide a definition of ``strong uncompletability'' for sets of entangled states. Like the notion of uncompletability, strong uncompletability was also introduced for product states in Ref.~\cite{Divincenzo03}. 
\begin{definition}\label{def2}
Consider an uncompletable entangled basis. If that uncompletable entangled basis cannot be completed even in any locally extended Hilbert space, then the given states form a strongly uncompletable entangled basis. 
\end{definition}
Note that a local extension of the Hilbert space can happen on any party's side, or on both. Note also that in both the definitions (Definitions \ref{def1} and \ref{def2}), if we replace entangled states by maximally entangled ones, then we get the notions of uncompletability and strong uncompletability for sets of maximally entangled states. (It should be remembered that while considering local extensions in case of maximally entangled bases, the local extensions must be carried out only on one side, to preserve the maximal entanglement property of the constituent states.) From Observation \ref{obs1}, it is quite clear that the UMEBs which are mentioned in that Observation, can be extended to a complete MEB in a sufficiently locally extended Hilbert space. In fact, the technique in Observation \ref{obs1} cannot be used to construct strongly uncompletable maximally entangled bases (SUCMEB), as an uncompletable maximally entangled basis can always be completed to a MEB in some locally extended Hilbert space. However, there is another interesting observation which can be extracted from the $2\otimes3$ UMEB, given in (\ref{eq4}). This observation is presented as the following:
\begin{proposition}\label{prop2}
If a state, let us say, $(1/\sqrt{2})(\ket{00^\prime}+\ket{11^\prime})$, is removed from the $2\otimes3$ UMEB, given in (\ref{eq4}), then it is not possible to get sufficient (in this case, three) pairwise orthogonal pure maximally entangled states from the rest of the Hilbert space to complete the basis. 
\end{proposition}
\begin{proof}
After the removal of the state, \((1/\sqrt{2})(\ket{00^\prime}+\ket{11^\prime})\), the orthogonal complement of the space spanned by the other three states in (\ref{eq4}), is spanned by \((1/\sqrt{2})(\ket{00^\prime}+\ket{11^\prime})\), and the two product states, \(|02^\prime\rangle\) and \(|12^\prime\rangle\). (Note that $\ket{0}$ and $\ket{1}$ are forming an orthogonal basis for the qubit Hilbert space, and on the other hand, $\ket{0^\prime}$, $\ket{1^\prime}$, and $\ket{2^\prime}$ are forming an orthogonal basis for the qutrit Hilbert space.) Let us now consider an arbitrary linear superposition of these three states in the orthogonal complement, viz. \((e/\sqrt{2})(\ket{00^\prime}+\ket{11^\prime}) + f|02^\prime\rangle+ g|12^\prime\rangle\), with \(|e|^2 + |f|^2 + |g|^2 =1\). If this state has to be maximally entangled, its local density on the qubit side must be maximally mixed. Forcing that constraint results in vanishing \(f\) and \(g\). Therefore, the orthogonal complement can support only a single maximally entangled state. This completes the proof.
\end{proof}

However, it is possible to construct three pairwise orthogonal nonmaximally entangled states to complete the basis. So, the set of the remaining three states after $(1/\sqrt{2})(\ket{00^\prime}+\ket{11^\prime})$ is removed, is an uncompletable maximally entangled basis (UCMEB), but not a UCEB. Here, the UCMEB is locally indistinguishable, as in $2\otimes2$, three entangled states are always locally indistinguishable \cite{Walgate02}. Also notice that to distinguish a UMEB of Observation \ref{obs1} locally, one additionally requires a $d\otimes d$ maximally entangled state as resource \cite{Ghosh01, Horodecki03}. 

\subsection{Application: Methods to generate two-element LOCC-indistinguishable ensembles}\label{sec2subsec1}
We provide here a potential application of the concept of the UEB. Consider any two orthogonal mixed states, $\rho_1=p_1\ket{\phi_1}\bra{\phi_1}+p_2\ket{\phi_2}\bra{\phi_2}$ and $\rho_2=q_1\ket{\phi_3}\bra{\phi_3}+q_2\ket{\phi_4}\bra{\phi_4}$, where $\mathcal{B} = \{\ket{\phi_i}\}_{i=1}^4$ is a two-qubit orthonormal basis, $p_1+p_2=1=q_1+q_2$, \(p_1, p_2, q_1, q_2 \geq 0\). For such mixed states, we can state the following theorem.

\begin{theorem}\label{theo1} If $\mathcal{B}$ contains a two-qubit unextendible entangled basis, then the ensemble $\{\rho_1, \rho_2\}$ cannot be perfectly distinguished by separable measurements (and therefore, by LOCC). Nevertheless, the above ensemble is conclusively locally distinguishable. \end{theorem}

\noindent
{\it Proof.}~We have already proved that in a two-qubit system, the only possible cardinality for an unextendible entangled basis  is three (see Proposition \ref{prop1}). Hence, any two-qubit orthonormal basis $\mathcal{B}$ which contains a two-qubit UEB, must contain a product state and three entangled states. Without loss of generality, we assume that the states $\{\ket{\phi_i}\}_{i=1}^3$ are entangled states and the state $\ket{\phi_4}$ is a product state. We mention here that any convex mixture of an entangled pure state and a product state, always produces an entangled state \cite{Horodecki03-1}. This implies that $\rho_2$ is entangled and the projector $\mathbb{P}_2$ onto the support of $\rho_2$ is also inseparable. 

If $\mathbb{P}_2$ is inseparable, then the projector $\mathbb{P}_1$ onto the support of $\rho_1$ is inseparable too. This follows from a contradiction: Suppose, $\mathbb{P}_1$ is separable and it can be written as a mixture of product states. Now, 
\begin{equation}\label{eq6}
\mathbb{P}_2 = \mathbb{I}-\mathbb{P}_1, 
\end{equation}
where $\mathbb{I}$ is the $4\times4$ identity matrix. If we take partial transpose of the operator on the left-hand-side,  then it leads to a non-positive operator. On the other hand, the right-hand-side is positive under partial transpose, as $\mathbb{I}$ is invariant under partial transpose and the product states in the assumed separable decomposition of $\mathbb{P}_1$ will be mapped to another set of product states under the partial transpose operation.

However, it is known that the projectors $\mathbb{P}_1$ and $\mathbb{P}_2$ must be separable in order to distinguish the states $\rho_1$ and $\rho_2$ by separable measurements \cite{Chitambar14-1}. But this condition is not satisfied in the present case. Clearly, the ensemble $\{\rho_1, \rho_2\}$ cannot be perfectly distinguished by separable measurements. This also proves that the ensemble $\{\rho_1, \rho_2\}$ cannot be perfectly distinguished by local operations and classical communication as the set of all LOCC measurements $\subset$ the set of all SEP.

For the second part of the above theorem, we first mention that given two orthogonal quantum states $\rho_1$ and $\rho_2$, they can be conclusively locally distinguishable if and only if there exists two product state $\ket{\alpha_1}$ and $\ket{\alpha_2}$ such that $\bra{\alpha_i}\rho_j\ket{\alpha_i}$ = $\delta_{ij}$ holds \cite{Chefles04}. There is a product state, viz. $\ket{\phi_4}$ in the support of $\rho_2$,  and this product state must be orthogonal to $\rho_1$. Again, any two-dimensional subspace of a two-qubit Hilbert space must contain at least one product state \cite{Sanpera98}. Therefore, there must be a product state in the support of $\rho_1$ and this product state must be orthogonal to $\rho_2$. Hence, the states $\rho_1$ and $\rho_2$ must be conclusively locally distinguishable. These completes the proof of the above theorem. \hfill$\square$
\vskip 0.1 in

Let us state here a few points that are relevant to the above theorem. 
\begin{itemize}
\item Theorem~\ref{theo1} provides us a method to systematically generate two-element LOCC-indistinguishable ensembles of orthogonal quantum states. Actually, the indistinguishability is for the strictly larger class of separable quantum operations. Therefore,  like an unextendible product basis \cite{Bennett99, Divincenzo03} gives us a method to generate an ensemble having ``quantum nonlocality without entanglement'' \cite{Bennett99-1}, an unextendible entangled basis gives us a method to generate a two-element LOCC-indistinguishable ensemble. 
\item Two-element ensembles of multiparty orthogonal {\it pure} states are always locally distinguishable \cite{Walgate00, Virmani01, Ji05}.
\item It is always possible to consider two multiparty non-orthogonal pure states that will of course not be LOCC-distinguishable deterministically but may be so conclusively. However, being non-orthogonal states, the bit encoded in them can never be perfectly decoded, with any (global or local) quantum measurement, while using sufficient amount of entanglement as an extra resource, the bit encoded in the \(\{\rho_1, \rho_2\}\) ensemble can be perfectly decoded. 
\end{itemize}

The above remarks lead us to consider the following information processing task that can be implemented by using any \(\{\rho_1, \rho_2\}\) ensemble corresponding to a UEB. We consider three spatially separated parties: Enola, Mycroft, and Sherlock~\cite{Springer06}. Enola is connected to Mycroft and Sherlock via quantum channels. But there is no quantum channel between Mycroft  and Sherlock. In fact, Mycroft and Sherlock are only allowed to perform LOCC. In this setting, Enola wants to share one cbit of information with Mycroft and Sherlock. For the encoding, Enola can prepare a quantum system in a state which is chosen from a set of two orthogonal states, and then the quantum system can be distributed between Mycroft and Sherlock. There are however the following conditions for sharing the information. 
\begin{itemize}
\item[(i)] Mycroft and Sherlock cannot decode the information by using LOCC, but can do so perfectly  if provided with enough shared pure entangled states as  resource.
\item[(ii)] Mycroft and Sherlock face hefty penalties if they make an error in recognizing the cbit.
\item[(iii)] Equally prohibitive penalties are levied from Mycroft and Sherlock if they try to gather the cbit and fail.  
\end{itemize}
Item (i) requires that the cbit be encoded in two orthogonal states that are LOCC-distinguishable. Item (ii) implies that Mycroft and Sherlock will not attempt a minimum-error LOCC distinguishing protocol. Item (iii) implies that they will not attempt a conclusive local distinguishing protocol to discern the cbit. Notice that the ensembles of Theorem~\ref{theo1} meet the requirements in all the three items required in this task.

It may be opined that one can use ensembles that are made up of two orthogonal two-party (mixed) states that are not locally distinguishable either deterministically (perfectly) or conclusively. They can certainly be used but since they possess a higher level of nonlocality (in the sense of local indistinguishability of orthogonal states) than the ones mentioned in Theorem~\ref{theo1}, it is reasonable to presume that they are more costly than the latter ones. We will consider the latter ones below in relation to the result obtained in Theorem~\ref{theo2}. 

We mention here that it is not necessary to consider both the states $\rho_1$ and $\rho_2$ of Theorem~\ref{theo1} as mixed states. One can also consider an ensemble of a pure state and a mixed state, and still the relevant features present in the ensemble of two mixed states can be captured. Consider, for example, the UEB of (\ref{eq2}), and let us  take any state of the UEB as $\rho_1$ (which now is therefore pure). Let $\rho_2$ be a convex combination of the other two states of that UEB. Such an ensemble also has the property that they cannot be perfectly distinguished by SEP but they can be conclusively locally distinguishable. It follows from the same proof technique as given in case of Theorem~\ref{theo1}. An important feature of this ensemble is that it covers the minimum dimension of a Hilbert space, which is three in the present case (sum of the dimensions of the supports corresponding to $\rho_1$ and $\rho_2$), and possesses the requisite features, viz. being deterministic locally indistinguishable but conclusive locally distinguishable.

{\bf Generalization: higher dimensions and higher cardinalities.}
It is possible to develop a technique to construct mixed states as given in Theorem \ref{theo1}, also in higher dimensions. This technique starts with a previous discussion. We go back to Eq.~(\ref{eq4}). This is a $2\otimes2$ MEB which plays the role of a UMEB in $2\otimes3$. In this $2\otimes3$ Hilbert space, the product states which are orthogonal to the states of Eq.~(\ref{eq4}), are $\ket{02^\prime}$ and $\ket{12^\prime}$. If we consider any state from Eq.~(\ref{eq4}) and take any convex combination with $\ket{02^\prime}$, then the resulting state must be inseparable. We label such a state as $\rho_1$. We then consider another state from Eq.~(\ref{eq4}) and take any convex combination with $\ket{12^\prime}$. In this case also, the resulting state must be inseparable. We label this state as $\rho_2$. We next consider another state, $\rho_3$, which is produced by taking any convex combination of the remaining states of Eq.~(\ref{eq4}). These three states can be  distinguished perfectly by SEP, only when there exist three separable operators $\{\Pi_i\}_{i=1}^3$, such that Tr$(\rho_i\Pi_j)$ = $\delta_{ij}$, $\forall i,j=1,2,3$. For completeness, the sum of the operators $\Pi_i$, must be the identity operator, acting on the $2\otimes3$ Hilbert space. In the present case, each operator $\Pi_i$ must be a rank-2 operator and it must be contained within the support of $\rho_i$. But this is impossible because there are no rank-2 separable operators in the supports of $\rho_1$ and $\rho_2$. Thus, the states $\rho_1$, $\rho_2$, and $\rho_3$ cannot be perfectly distinguished by SEP. Interestingly, in the support of $\rho_i$, it is possible to find a product state for $\forall i=1,2,3$. This implies that the states $\rho_1$, $\rho_2$, and $\rho_3$ are conclusively locally identifiable \cite{Chefles04}. Following this technique, one can produce higher dimensional sets of high cardinality, the states of which hold similar properties like the ensemble described in Theorem~\ref{theo1}.

\begin{theorem}
\label{theo2}\textbf{\textit{Stronger nonlocality:}} There exist UEBs which can  lead to ensembles of two orthogonal quantum states that cannot be conclusively locally distinguished. \end{theorem} 

\noindent 
{\it Proof}: To construct an ensemble $\{\rho_1, \rho_2\}$ in such a way that the ensemble must not be conclusively identified by LOCC, one could use the UEB given in (\ref{eq1}). One can take $\frac{1}{\sqrt{2}}(\ket{01}-\ket{10})$ as $\rho_1$, and $\rho_2$ as any convex combination of the other two states of the UEB. This ensemble cannot be conclusively distinguished by LOCC because $\rho_1$ can never be conclusively locally identifiable. The argument for impossibility of conclusive identification of the state $\rho_1$, follows from the fact that it is not possible to find a product state with the  property that the product state is non-orthogonal to $\rho_1$ but orthogonal to $\rho_2$ \cite{Chefles04}. 
\hfill $\square$ 
\vskip 0.1 in

If we consider Theorems~\ref{theo1} and \ref{theo2} together, then, we obtain a classification among the two-qubit ensembles of cardinality two, viz.,  
\begin{itemize}
\item[(I)] the ensembles which cannot be perfectly distinguished by SEP but that ensemble is conclusively locally distinguishable, and
\item[(II)] the ensemble which cannot be conclusively distinguished by LOCC.
\end{itemize}
They can be used in information processing protocols as per necessity with respect to their nonlocality strength.

There are articles containing discussions related to locally indistinguishable sets of two orthogonal states in a two-qubit system. For example, one can go through  Refs.~\cite{Duan14, Chitambar14-1}. However, here we establish connections between such ensembles and UEBs. In fact, UEBs can be helpful in the systematic constructions of such ensembles. Moreover, the above classification provide us the opportunity to establish an order relation between the ensembles. For instance, the sets of Theorems~\ref{theo1} and \ref{theo2} are both nonlocal:  under the setting of {\it perfect} local discrimination of orthogonal quantum states, both sets are equally nonlocal. But conclusive local discrimination of the states provide us the privilege of identifying the more nonlocal sets. In particular, the sets considered in Theorem~\ref{theo2} are more nonlocal, compared to those in Theorem~\ref{theo1}.

Theorem~\ref{theo2} can also be seen from the following perspective. It is known that two orthogonal pure states can always be perfectly distinguished by LOCC \cite{Walgate00}. In fact, any two linearly independent pure multipartite states can always be conclusively distinguished by LOCC \cite{Ji05}. Theorem~\ref{theo2} provides ensembles of two orthogonal quantum states which does not cover the whole space and yet they are not conclusively locally distinguishable.

\section{Multipartite systems}\label{sec3}
Given a maximally entangled state in a bipartite system, it is always possible to transform the state to any state of the considered Hilbert space via LOCC \cite{Lo01, Vidal00, Nielsen99, Vidal99, Hardy99, Jonathan99}. But in a multipartite system (a system with more than two parties) there is no such state from which it is possible to get an arbitrary state via LOCC, even probabilistically. This is due to the existence of stochastic LOCC inequivalent classes \cite{Dur00, Verstraete02, Miyake03, Horodecki09-1, Guhne09, Das17}. In this sense, in a multipartite system, there is no state which plays the role of a maximally entangled state like in bipartite systems. With reference to this context, it is important to mention that there are some alternative concepts of maximally entangled states in multipartite systems, such as absolutely maximally entangled states \cite{Facchi08, Helwig12, Helwig13, Helwig13-1, Goyeneche15, Huber17, Huber18, Raissi18, Alsina19, Shen20}. For local transformation rules among multipartite entangled states, see e.g. Refs.~\cite{Dur00, Verstraete02, Miyake03, Bennett00, Leifer04, Kraus10, Kraus10-1, Turgut10, Mathonet10, Ribeiro11, Vicente13, Spee17, Neven20}. We stick to the notion of multipartite UEBs which contain only genuinely entangled states. 

In $\mathbb{C}^2\otimes\mathbb{C}^2\otimes\mathbb{C}^2$, we construct two different UEBs and analyze different properties of those UEBs. The first UEB consists of only W-type states  \cite{Dur00, Zeilinger92, Sen(De)03} but the second UEB consists of both Greenberger-Horne-Zeilinger (GHZ)-like states and W-type states. See Ref.~\cite{Dur00} for the structures of the states belonging to the GHZ-class and the W-class. The first UEB is constituted by the following states:
\begin{equation}\label{eq7}
\begin{array}{l}
\frac{1}{\sqrt{3}}(\ket{001}+\ket{010}+\ket{100}),\\[1.5 ex]
\frac{1}{\sqrt{3}}(\ket{001}+\omega\ket{010}+\omega^2\ket{100}),\\[1.5 ex]
\frac{1}{\sqrt{3}}(\ket{001}+\omega^2\ket{010}+\omega\ket{100}),\\[1.5 ex]
\frac{1}{\sqrt{3}}(\ket{000}+\ket{101}+\ket{110}),\\[1.5 ex]
\frac{1}{\sqrt{3}}(\ket{000}+\omega\ket{101}+\omega^2\ket{110}),\\[1.5 ex]
\frac{1}{\sqrt{3}}(\ket{000}+\omega^2\ket{101}+\omega\ket{110}).
\end{array}
\end{equation} 
Consider any one of the above states, and then let us trace out any of the qubits. The two-qubit reduced density matrix has only one product state in its range. So, the above states belong to the W-class \cite{Dur00}. We now present the following theorem.

\begin{theorem}\label{theo3}
The six states in (\ref{eq7}), belonging to the W-class, form an unextendible entangled basis, made of genuinely entangled states. Moreover, the unextendibility property of the basis remains conserved across every bipartition. 
\end{theorem} 

\noindent {\it Remark:} Here, by ``conserved'', we mean the carrying over of the property of unextendibility of the multiparty basis to the bipartition cases.  

\begin{proof}
Notice that there is a two-dimensional space that is orthogonal to the six states. It is spanned by the two states, $\ket{011}$ and $\ket{111}$, and they are separable across every bipartition. Taking any linear combination of these two fully separable states, it is not possible to generate any entangled states (neither biseparable states nor genuinely entangled states). So, the above six states not only form a three-qubit UEB, but also the unextendibility property of the UEB remains ``conserved'' across every bipartition. Obviously, if  linear combinations of the orthogonal fully separable pure states in the complementary subspace are able to produce biseparable states, then the given states must not form a UEB in at least one bipartition. 
\end{proof}

A general algorithm to produce UEBs in any multipartite system, whose unextendibility property should remain conserved across every bipartition, includes two steps: (i) finding a set of pure orthogonal genuinely entangled states which span a proper subspace of the considered Hilbert space, (ii) in the complementary subspace there should be only fully separable states. We believe that the result in Theorem \ref{theo3} is interesting, especially because there is no known example of an unextendible product basis (UPB) which is unextendible across every bipartition. We note here that a UPB is unextendible across every bipartition if it is not possible to get any product state in the complementary subspace considering any bipartition. On the other hand, a UEB is unextendible across every bipartition if it is not possible to get any entangled state in the complementary subspace considering any bipartition. 

The result of Theorem \ref{theo3} could also be seen in  light of the fact that product states including biseparable ones of a multipartite system form a set of measure zero, and almost all states are genuinely multisite entangled. In spite of this abundance of entangled states, there does exist a multiparty UEB whose unextendibility is conserved across every bipartition. On the other hand, despite the meagre presence of product states, a multiparty UPB with the same property has not as yet been found. 

In this context, we mention that in Ref.~\cite{Agrawal19}, a type of incomplete basis is constructed, termed as unextendible biseparable basis which cannot be completed by adding product states across every bipartition. Here, the notion is completely opposite. Here, the basis consists of genuinely entangled states such that the orthogonal complement contains only triseparable states. We now present the following proposition.

\begin{proposition}\label{prop3}
In the multiqubit configuration, it is not possible to construct an unextendible product basis which is unextendible across every bipartition while it is possible to construct an unextendible entangled basis whose unextendibility property remains conserved across every bipartition. 
\end{proposition}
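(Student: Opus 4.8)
The plan is to prove the two halves of the proposition separately, since the positive half is already essentially in hand. For the second assertion---that a multiqubit unextendible entangled basis which is unextendible across every bipartition does exist---I would simply invoke Theorem~\ref{theo1}: the six W-class states in (\ref{eq5}) form such a basis in $2\otimes2\otimes2$, and this settles the claim by explicit example. (If one wants the statement for an arbitrary number of qubits $m>3$, I would point forward to the general algorithm sketched after Theorem~\ref{theo1}, namely pad the three-qubit construction so that the complementary subspace is spanned by fully separable computational-basis vectors whose linear span contains no entangled state; but for the purposes of this proposition a single example suffices.)

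The substantive part is the first assertion: no multiqubit unextendible product basis (UPB) can be unextendible across every bipartition. Here I would argue by contradiction. Suppose $\mathcal{S}=\{\ket{\psi_1},\dots,\ket{\psi_k}\}$ is a UPB on $m$ qubits that is unextendible across every bipartition, i.e.\ for each bipartition $(A|A^c)$ the orthogonal complement $\mathcal{S}^\perp$ contains no state that is entangled with respect to that cut. A state in $\mathcal{S}^\perp$ which is product across \emph{every} bipartition is precisely a fully separable state $\ket{\phi_1}\otimes\cdots\otimes\ket{\phi_m}$; so the hypothesis forces every vector in $\mathcal{S}^\perp$ to be fully product. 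I would then use the standard rigidity fact for UPBs: the complement of a genuine UPB on $m\geq 2$ qubits contains \emph{no} product state at all across at least one cut (indeed the defining property of the UPB is exactly that $\mathcal{S}^\perp$ has no fully product vector). The tension I want to exploit is dimensional: a UPB on $m$ qubits must be incomplete, so $\dim\mathcal{S}^\perp\geq 1$; but if every vector of $\mathcal{S}^\perp$ is fully separable, then $\mathcal{S}^\perp$ is a linear subspace all of whose elements are product, which (as already used in the two-qubit arguments of Proposition~\ref{prop1}) forces a very restrictive tensor structure---in the bipartite reduction of any chosen cut, such a subspace must be of the form $\ket{u}\otimes V$ or $V\otimes\ket{u}$ for a fixed local vector $\ket{u}$.

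The key steps, in order, are therefore: (1)~reduce ``unextendible across every bipartition'' to ``$\mathcal{S}^\perp$ is a subspace of fully separable states''; (2)~recall that a multiqubit UPB is by definition incomplete with $\mathcal{S}^\perp\neq\{0\}$, and that $\mathcal{S}^\perp$ contains no fully separable state---this is the contradiction, since step~(1) says it contains \emph{only} fully separable states, whereas it is nonzero; (3)~conclude the UPB case is impossible, and pair this with the existence example of Theorem~\ref{theo1} to finish. The one point that needs care---and which I expect to be the main obstacle---is step~(2): one must be precise about what ``unextendible product basis'' means here, because the naive reading (``$\mathcal{S}^\perp$ has no fully product state'') already contradicts ``$\mathcal{S}^\perp$ has only fully product states and is nonzero,'' making the proof almost immediate; but the paper's convention (see the discussion before Theorem~\ref{theo1}) is that a UPB need only have no product state that is product across the \emph{given, fixed} partition into the $m$ parties. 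Under that reading one instead argues: if $\mathcal{S}^\perp$ were entirely fully separable it would in particular have no state that is product across the $m$-party cut only if $\mathcal{S}^\perp=\{0\}$, contradicting incompleteness; hence some nonzero vector of $\mathcal{S}^\perp$ is fully product, but then it is product across every bipartition, so $\mathcal{S}$ fails to be unextendible across those cuts---again a contradiction. Either way the obstruction is purely definitional bookkeeping, and once the definitions are pinned down the argument is short.
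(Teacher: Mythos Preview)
Your treatment of the second (existence) half is fine and matches the paper: invoke Theorem~\ref{theo1} (and the Appendix for more qubits).

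The first half, however, rests on a misreading of the phrase ``unextendible across every bipartition'' in the UPB case. You interpret it as: for each cut $(A\,|\,A^c)$, the complement $\mathcal{S}^\perp$ contains no state that is \emph{entangled} across that cut, and hence $\mathcal{S}^\perp$ consists of fully separable vectors. That is the UEB convention. For a \emph{product} basis the paper's convention (stated just after Theorem~\ref{theo1}) is the opposite: $\mathcal{S}$ is a UPB unextendible across a bipartition if $\mathcal{S}^\perp$ contains no state that is \emph{product} across that cut. Under the correct reading, ``unextendible across every bipartition'' forces every nonzero vector of $\mathcal{S}^\perp$ to be \emph{genuinely multipartite entangled}, not fully separable. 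Your steps~(1)--(3) therefore start from a false reduction, and neither of the two ``readings'' you discuss in the last paragraph is the intended one; both still assume $\mathcal{S}^\perp$ is fully separable, which is never a consequence of the hypothesis.

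The paper's actual argument is short and uses a different input: in $\mathbb{C}^2\otimes\mathbb{C}^d$ there is no UPB (Ref.~\cite{Divincenzo03}). Take any single-qubit-versus-rest cut of an $m$-qubit UPB $\mathcal{S}$. The states of $\mathcal{S}$ are fully separable, hence product across this $2\otimes 2^{m-1}$ cut, and $\mathcal{S}$ is incomplete; the $2\otimes d$ result then guarantees a vector in $\mathcal{S}^\perp$ that is product across this cut. Thus $\mathcal{S}$ fails to be unextendible in that bipartition, proving the claim. Your proposal does not invoke (or rediscover) the $2\otimes d$ no-UPB fact, which is the essential ingredient here.
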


\begin{proof}
The first part of the above proposition is due to the fact that in $\mathbb{C}^2\otimes\mathbb{C}^d$, there is no UPB \cite{Divincenzo03}. So, there is no multi-qubit UPB which is unextendible across every bipartition. The second part of the above proposition is due to Theorem \ref{theo3} and the Appendix.
\end{proof}

We now move to consider the cardinality (i.e., the size) of a multiparty UEB that remains a UEB in all partitions, and present the following proposition.

\begin{proposition}\label{prop4}
A three-qubit unextendible entangled basis which is unextendible across every bipartition can have the cardinality of six and seven.
\end{proposition}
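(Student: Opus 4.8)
The plan is to split the statement into an upper bound on the size of the complementary subspace and explicit realizations of the two admissible cardinalities. First I would note that if a three-qubit UEB has cardinality $k$ and is unextendible across every bipartition, then its orthogonal complement $\mathcal{S}$ has dimension $8-k$, and every nonzero vector of $\mathcal{S}$ must be fully separable: a vector that is not fully separable is, by definition, entangled across at least one bipartition, so its presence in $\mathcal{S}$ would let the basis be extended across that bipartition, contradicting the hypothesis. Moreover $k\le 7$, since a UEB by definition spans a \emph{proper} subspace, so $\mathcal{S}\neq\{0\}$. Hence everything reduces to the claim that a subspace of $\mathbb{C}^2\otimes\mathbb{C}^2\otimes\mathbb{C}^2$ all of whose nonzero vectors are fully separable has dimension at most $2$; this forces $8-k\le 2$, i.e. $k\ge 6$, and combined with $k\le 7$ gives $k\in\{6,7\}$.

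To prove that bound I would start from the standard fact that a subspace of $\mathbb{C}^2\otimes\mathbb{C}^2$ consisting only of product vectors has dimension at most $2$, and that a two-dimensional such subspace is of the form $|v\rangle\otimes\mathbb{C}^2$ or $\mathbb{C}^2\otimes|v\rangle$. Using this I would show that a two-dimensional fully-separable subspace of three qubits must be $\mathbb{C}^2_X\otimes|b\rangle_Y\otimes|c\rangle_Z$ for some ordering $\{X,Y,Z\}=\{A,B,C\}$: writing such a subspace as $\mathrm{span}(|a_1b_1c_1\rangle,|a_2b_2c_2\rangle)$, if the two spanning product vectors differed in two or three of their single-qubit factors, then a direct check exhibits a linear combination that is entangled across some cut; so they agree in exactly two factors. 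Finally I would exclude a three-dimensional fully-separable subspace $\mathcal{S}$: choose a two-dimensional $\mathcal{T}\subset\mathcal{S}$, which by the previous step equals, say, $|a\rangle_A\otimes|b\rangle_B\otimes\mathbb{C}^2_C$, and pick $v=|p\rangle_A|q\rangle_B|r\rangle_C\in\mathcal{S}\setminus\mathcal{T}$; requiring $|a\rangle|b\rangle|x\rangle+\lambda v$ to be product across $A|BC$ for all $|x\rangle$ and $\lambda$ forces $|p\rangle\parallel|a\rangle$, and then requiring it to be product across $B|C$ forces $|q\rangle\parallel|b\rangle$, so $v\in\mathcal{T}$ — a contradiction. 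Thus $\dim\mathcal{S}\le 2$.

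For realizability, $k=6$ is already given by Theorem~\ref{theo1}: the six W-type states in (\ref{eq5}) have orthogonal complement $\mathrm{span}(|011\rangle,|111\rangle)=\mathbb{C}^2_A\otimes|1\rangle_B\otimes|1\rangle_C$, a two-dimensional fully-separable subspace. For $k=7$ I would take a one-dimensional fully-separable complement, e.g. $\mathrm{span}(|111\rangle)$, and exhibit seven mutually orthogonal genuinely entangled states spanning $\{|111\rangle\}^{\perp}$. A generic orthonormal basis of this seven-dimensional space does the job: $\{|111\rangle\}^{\perp}$ does contain genuinely entangled vectors (a vector with full support on the seven computational basis states other than $|111\rangle$ cannot be product across any cut, since across $A|BC$ the $A{=}0$ block has a nonzero $|11\rangle$-component while the $A{=}1$ block does not, so they cannot be proportional, and $|111\rangle$ is symmetric under permuting the qubits), and for each index the set of orthonormal bases whose $i$-th vector is not genuinely entangled is a proper subvariety of the manifold of orthonormal bases, so the finite union is avoided generically; alternatively the explicit W-plus-GHZ basis of the Appendix furnishes such a seven-element example. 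Hence the cardinality is exactly six or seven, and both occur.

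The main obstacle I expect is the upper bound, specifically the exclusion of three-dimensional fully-separable subspaces: the case analysis showing that two product vectors spanning a fully-separable plane must agree in two of their three single-qubit factors, and the follow-up argument that no third fully-separable direction can be adjoined, carry the real content. The realizability half is either immediate (for $k=6$) or a routine genericity/explicit-construction argument (for $k=7$).
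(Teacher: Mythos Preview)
Your proposal is correct and follows essentially the same strategy as the paper: both reduce the lower bound $k\ge 6$ to the claim that a fully-separable subspace of $(\mathbb{C}^2)^{\otimes 3}$ has dimension at most two, and both rely on the explicit six- and seven-state constructions for realizability. Your treatment of the dimension bound is in fact more careful than the paper's --- the paper tacitly assumes that three mutually orthogonal fully separable vectors in the complement share a common single-qubit factor across each cut, whereas you derive the structure of two-dimensional fully-separable subspaces first and then exclude a third direction. One minor slip: the explicit seven-state W-plus-GHZ example for three qubits is the basis in (\ref{eq6}) (Theorem~\ref{theo2}), not the Appendix, which treats four and more qubits; your genericity argument already suffices, so this is only a misattribution.
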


\begin{proof}
In case of three qubits, any set of five or a lower number of pure mutually orthogonal genuinely entangled states cannot show unextendibility of the required kind, i.e., unextendibility that is retained in all partitions. This can be seen as follows. We consider any set of five pure mutually orthogonal genuinely multipartite entangled states, and assume that they form a UEB that remains a UEB in all partitions. So, in the remaining part of the multiparty Hilbert space (the orthogonal complement of the space spanned by the states of the UEB), one can always find at least three mutually orthogonal fully separable pure states. We now consider a particular bipartition, and take any linear combination of those three fully separable states, such that the coefficients are nonzero. By our assumption, the newly generated state must be separable in that bipartition. Such a state can be written as $\ket{a^\prime}(a_1\ket{a_1}+a_2\ket{a_2}+a_3\ket{a_3})$, where $|a_1|^2+|a_2|^2+|a_3|^3=1$ and \(a_1\), \(a_2\), \(a_3\) are nonzero. Again, the states $\ket{a_i}$ $\forall i = 1,2,3$, are pairwise orthogonal. According to our assumption, the two-qubit state \(a_1\ket{a_1}+a_2\ket{a_2}+a_3\ket{a_3}\) must be separable for all \(a_1\), \(a_2\), \(a_3\), and so can be expressed in the form  \(\ket{a^{\prime\prime}}(a_1^\prime\ket{a_1^\prime}+a_2^\prime\ket{a_2^\prime}+a_3^\prime\ket{a_3^\prime})\), and since \(|a_1\rangle\), \(|a_2\rangle\), \(|a_3\rangle\) are mutually orthogonal and \(a_1\), \(a_2\), \(a_3\) are nonzero, we must have that \(|a_1^\prime\rangle\), \(|a_2^\prime\rangle\), \(|a_3^\prime\rangle\) are mutually orthogonal and \(a_1^\prime\), \(a_2^\prime\), \(a_3^\prime\) are nonzero. This is a contradiction as the states \(|a_1^\prime\rangle\), \(|a_2^\prime\rangle\), \(|a_3^\prime\rangle\) belong to a qubit space. So, the two-qubit state $a_1\ket{a_1}+a_2\ket{a_2}+a_3\ket{a_3}$ might be an entangled state. Clearly, the original three-qubit state may not be separable across every bipartition. This shows that our initial assumption of the existence of a three-qubit UEB of cardinality five which moreover remains a UEB in all partitions was not true.

When the cardinality of the given set is less than five, then also it is possible to have a biseparable state in the complementary subspace. Hence, we arrive to the above proposition.
\end{proof}

Next, we present a second type of UEB for a three-qubit system. The states are given in the following list.
\begin{equation}\label{eq8}
\begin{array}{l}
\frac{1}{2}(\ket{000}+\ket{011}+\ket{101}+\ket{110}),\\[1.5 ex]
\frac{1}{2}(\ket{000}+\ket{011}-\ket{101}-\ket{110}),\\[1.5 ex]
\frac{1}{2}(\ket{000}-\ket{011}+\ket{101}-\ket{110}),\\[1.5 ex]
\frac{1}{2}(\ket{000}-\ket{011}-\ket{101}+\ket{110}),\\[1.5 ex]
\frac{1}{\sqrt{3}}(\ket{001}+\ket{010}+\ket{100}),\\[1.5 ex]
\frac{1}{\sqrt{3}}(\ket{001}+\omega\ket{010}+\omega^2\ket{100}),\\[1.5 ex]
\frac{1}{\sqrt{3}}(\ket{001}+\omega^2\ket{010}+\omega\ket{100}).\\[1.5 ex]
\end{array}
\end{equation}
Notice that in the above list, the first four states belong to the GHZ-class, while the remaining three  belong to the W-class. We now present the following theorem.

\begin{theorem}\label{theo4}
The states in (\ref{eq8})  form a three-qubit unextendible entangled basis of cardinality seven, and is unextendible across every bipartition. Also, implementing the measurement onto the complete basis corresponding to  the unextendible entangled basis is a nonlocal operation. Moreover, a local implementation of the nonlocal operation cannot be performed by a pure entangled resource state of the same dimensions as the basis states.
\end{theorem}
\begin{proof}
To complete the above basis, there is only one state left in the Hilbert space, which is $\ket{111}$, a fully separable state. Therefore, the above seven states form a UEB of maximum cardinality. It is also true that there is no biseparable state which is orthogonal to the above seven states. Therefore, the above UEB is also unextendible across every bipartition. 

An important property of the complete basis (which includes four states belonging to the GHZ-class, three states belonging to the W-class, and a fully separable state) is that distinguishing them corresponds to a nonlocal operation, and moreover, the operation  cannot even be implemented locally using any three-qubit entangled resource. The proof of this follows from the fact that the basis contains states from both the stochastic LOCC inequivalent classes of three-qubit pure states, and therefore, it is not possible to find a three-qubit pure resource state from which one can get all the basis states with some nonzero probability. But the non-availability of a single state in a certain multiparty Hilbert space that can be transformed to all the states (of that space) in a set with some nonzero probability is known to imply the non-existence of a resource state in that space for distinguishing the set of states  \cite{Bandyopadhyay16}. This implies that for the nonlocal operation to distinguish the states, one cannot use a three-qubit pure state as a resource.
\end{proof}

We now move to discuss an interesting local indistinguishability property for the UEB constituted by the states in (\ref{eq8}).

\begin{proposition}\label{prop5}
The unextendible entangled basis of cardinality seven formed by the states in (\ref{eq8}) is locally indistinguishable across every bipartition. Moreover, there exists a subset of five states which possesses such a property. 
\end{proposition}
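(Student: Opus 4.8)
The plan is to establish local indistinguishability in two layers: first for the full seven-element basis across an arbitrary bipartition, and then by exhibiting an explicit five-element subset that already fails local distinguishability. For the bipartition, note that by symmetry of the states in (\ref{eq6}) under permutations of the three qubits, it suffices to treat one cut, say $A|BC$, with $A$ the qubit carrying the first label. So I would regroup the seven states as vectors in $\mathbb{C}^2\otimes\mathbb{C}^4$ and ask whether a nontrivial one-way LOCC protocol (a measurement by $A$ followed by $BC$ identifying the outcome, or vice versa) can perfectly discriminate them. Because perfect LOCC discrimination implies perfect one-way LOCC discrimination starting from \emph{some} party is not generally true, I would instead use the standard obstruction: a set of orthogonal states is perfectly LOCC-distinguishable only if there is a nontrivial product measurement operator $E = M_A\otimes M_{BC}$ (or $M_{AB}\otimes M_C$, etc.) that does not disturb the states, i.e. $\{E^{1/2}\psi_i\}$ remain orthogonal and $E$ is not proportional to identity. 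The computation I expect to carry out is: impose that $M_A\otimes I_{BC}$ applied to the seven states keeps them pairwise orthogonal, and show this forces $M_A \propto I_A$; symmetrically on the other qubits. This is the Walgate--Hardy-type rigidity argument, and it is the technical heart.

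Concretely, I would write $M_A = \mathrm{diag}$ in some basis and compute the relevant overlaps $\langle \psi_i | M_A\otimes I | \psi_j\rangle$ for the pairs that were orthogonal; the four GHZ-class states in (\ref{eq6}) are the Bell states on the $BC$ system tensored appropriately, and they already span enough of the space that preserving their mutual orthogonality under a one-sided operator pins down $M_A$ up to scalars (this is essentially the fact that a full Bell basis cannot be locally distinguished, which I can cite via the structure argument already used in the excerpt around (\ref{eq3})). Then the W-class triple contributes the remaining constraints that rule out the last degrees of freedom. Having done one cut, the permutation symmetry of (\ref{eq6}) closes the argument for all three bipartitions, giving the first sentence of Proposition \ref{prop5}.

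For the second sentence I would pick the five states consisting of the four GHZ-class states together with one W-class state, say $\frac{1}{\sqrt{3}}(\ket{001}+\ket{010}+\ket{100})$. Across a fixed cut $A|BC$ these four GHZ states, being the four Bell states of $BC$ (up to the fixed $A$-dependence), already form a locally indistinguishable set in $\mathbb{C}^2\otimes\mathbb{C}^4$ by the known impossibility of distinguishing a complete Bell basis under LOCC; adding a fifth orthogonal state cannot help, since any LOCC protocol distinguishing the five would in particular distinguish the four. The remaining point is to verify that this five-state subset is genuinely \emph{across every} bipartition indistinguishable, which again follows from the permutation symmetry of the chosen subset under the relevant relabelings, together with checking that the nominated five states really do restrict, in each cut, to (a local-unitary image of) the Bell basis plus an extra vector.

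The main obstacle I anticipate is the rigidity step: showing that a nondisturbing one-sided POVM element must be trivial. For a generic seven-state set this can be delicate, but here the presence of the full four-element GHZ (Bell-on-$BC$) block makes it tractable, because a complete Bell basis is maximally rigid under one-sided operations. So the real work is just to verify that the $A$-side structure of those four states is a fixed product vector independent of the Bell label, so that the Bell-basis nondistinguishability transfers directly; once that is checked, everything else is bookkeeping over the three symmetric cuts.
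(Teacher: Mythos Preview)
Your proposal has a genuine gap centered on a miscomputation of how the four GHZ-class states in (\ref{eq6}) decompose across the cut $A|BC$. You assert that these four states are ``the Bell states on the $BC$ system'' with ``a fixed product vector independent of the Bell label'' on the $A$ side, and you plan to import Bell-basis indistinguishability from this. But writing them out in the $A|BC$ split gives
\begin{equation*}
\tfrac{1}{\sqrt{2}}\bigl(|0\rangle_A|\phi^+\rangle_{BC} \pm |1\rangle_A|\psi^+\rangle_{BC}\bigr),\qquad
\tfrac{1}{\sqrt{2}}\bigl(|0\rangle_A|\phi^-\rangle_{BC} \pm |1\rangle_A|\psi^-\rangle_{BC}\bigr),
\end{equation*}
which are maximally entangled across $A|BC$, not product on $A$. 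Worse for your plan, these four states are in fact \emph{perfectly LOCC distinguishable} in this cut: $BC$ projects onto $\mathrm{span}\{|\phi^+\rangle,|\psi^+\rangle\}$ versus $\mathrm{span}\{|\phi^-\rangle,|\psi^-\rangle\}$, which splits the set into two orthogonal pairs, and any two orthogonal pure states are LOCC distinguishable. So neither your rigidity step (which leans on this Bell-basis picture to ``pin down $M_A$'') nor your five-state argument (``the four already form a locally indistinguishable set'') goes through. The W-type states are not merely residual constraints here; they are what creates the obstruction.

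The paper takes a different and more direct route. For a fixed bipartition it selects just three states --- two of the GHZ-class states together with the first W-type state --- and has the $BC$ party project onto the two-dimensional subspace $\mathrm{span}\{|\phi^+\rangle,|\psi^+\rangle\}$ with nonzero probability. The three post-projection states are mutually orthogonal entangled states in an effective $2\otimes 2$ system, and the Walgate--Hardy result that three such states are never perfectly LOCC distinguishable then applies. Different pairs of GHZ-class states (first and second, first and third, first and fourth) handle the three bipartitions, and all triples used lie inside the first five states of (\ref{eq6}), which simultaneously yields the five-state subset claim. If you want to salvage a rigidity approach, you would need to run the $M_A\otimes I$ orthogonality-preservation computation honestly on the actual states, without the shortcut through the incorrect product-on-$A$ structure.
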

\begin{proof}
Consider the first two GHZ-like states and the first W-type state in the list in (\ref{eq8}), and then view them in the first qubit vs. the rest bipartition. It is possible to project them in a two-qubit subspace with some nonzero probability. The two-qubit subspace is formed by the first qubit of the three-qubit system, and a two-dimensional subspace of the Hilbert space of the other two qubits.  This two-dimensional subspace is spanned by the vectors \(\ket{\phi^+}=(1/\sqrt{2})(\ket{00}+\ket{11})\) and \(\ket{\psi^+}=(1/\sqrt{2})(\ket{01}+\ket{10})\), of the second and third qubits. 

The projected states are pure orthogonal entangled states. Now, it is known that in a two-qubit system, three orthogonal pure entangled states cannot be perfectly distinguished by LOCC \cite{Walgate02}. This indicates that the UEB is locally indistinguishable in the first qubit vs. rest configuration. Following the same arguments and using the first and the third GHZ-like states along with the first W-type state, it is possible to prove that the UEB is locally indistinguishable in the second qubit vs. rest configuration. Similarly, using the first and the fourth GHZ-like states along with the first W-type state, it is possible to prove that the UEB is locally indistinguishable in the third qubit vs. rest configuration. So, we have derived that the UEB cannot be perfectly distinguished by LOCC across any of the three bipartitions. Moreover, the first five states possess the property that they cannot be perfectly distinguished by LOCC across any bipartition. 
\end{proof}

It is possible to construct a set of five orthogonal three-qubit GHZ states which is locally indistinguishable across every bipartition \cite{Zhang20}. The above proposition provides a set of five orthogonal three-qubit states which does not belong to the same inequivalent class but again the set is locally indistinguishable across every bipartition. 

A general algorithm to construct the UEBs of the ``second kind'', viz. UEBs of cardinality of unity less than the total dimension of the joint Hilbert space and which contain states from stochastic LOCC inequivalent classes, for any number of qubits is given in the Appendix. 

\section{Conclusion}\label{sec4}
We found interrelations between the phenomena of local indistinguishability and unextendibility of entangled bases of quantum states of bipartite and multipartite physical systems. Among the results obtained in the bipartite case, was one where we proved that the cardinality of a two-qubit unextendible entangled basis is always restricted to three, and that such bases are not distinguishable even by separable measurements, which is known to be a larger class of quantum operations than local quantum operations and classical communication. As an application of the results, we identified a method to generate ensembles of two orthogonal mixed states that are locally indistinguishable. Two orthogonal pure states were known to be always locally distinguishable. We point to quantum information sharing protocols where such ensembles can be potentially useful.

In the case of multipartite bases of quantum states, we have introduced the notion of unextendibility across every bipartition within unextendible entangled bases. We have  identified a class of unextendible entangled bases which lead to a class of nonlocal operations, local implementation of which require entangled resource states from a higher-dimensional Hilbert space. 

\begin{widetext}
\section*{Appendix}
\noindent \textbf{UEBs of the ``second kind'' for an arbitrary number of qubits:}
\vspace{0.5cm}

We first consider a four-qubit system. Now, consider the bit strings $0001$, $0010$, $0100$, $1000$. Using the corresponding fully separable pure states, one can consider the following ``four-qubit W states'':
\begin{equation}\label{eq9}
\begin{array}{c}
\frac{1}{2}(\ket{0001}+\ket{0010}+\ket{0100}+\ket{1000}),~~
\frac{1}{2}(\ket{0001}+\ket{0010}-\ket{0100}-\ket{1000}),\\[1.5 ex]
\frac{1}{2}(\ket{0001}-\ket{0010}+\ket{0100}-\ket{1000}),~~
\frac{1}{2}(\ket{0001}-\ket{0010}-\ket{0100}+\ket{1000}).
\end{array}
\end{equation}
Next, we consider the bit-wise orthogonal bit strings $1110$, $1101$, $1011$, $0111$. Using the corresponding fully separable pure states, one can consider the following states, 
again of the ``W-type'':
\begin{equation}\label{eq10}
\begin{array}{c}
\frac{1}{2}(\ket{1110}+\ket{1101}+\ket{1011}+\ket{0111}),~~
\frac{1}{2}(\ket{1110}+\ket{1101}-\ket{1011}-\ket{0111}),\\[1.5 ex]
\frac{1}{2}(\ket{1110}-\ket{1101}+\ket{1011}-\ket{0111}),~~
\frac{1}{2}(\ket{1110}-\ket{1101}-\ket{1011}+\ket{0111}).
\end{array}
\end{equation}
For a four-qubit system, there are a total of sixteen states in an orthogonal basis. The remaining eight orthogonal bit strings are $0000$, $1111$, $0011$, $1100$, $0101$, $1010$, $0110$, $1001$. Keeping aside the bit strings $0000$, $1111$, $0011$, $1100$, the other four can be used to construct four genuinely entangled states of the ``GHZ-type'' as follows: 
\begin{equation}\label{eq11}
\begin{array}{c}
\frac{1}{\sqrt{2}}(\ket{0101}\pm\ket{1010}),~~
\frac{1}{\sqrt{2}}(\ket{0110}\pm\ket{1001}).
\end{array}
\end{equation}
We next consider another three genuinely entangled states which are given as the following: 
\begin{equation}\label{eq12}
\begin{array}{c}
\frac{1}{\sqrt{2}}(\ket{0011}+\ket{1100}),~~\frac{1}{\sqrt{2}}(\frac{1}{\sqrt{2}}\ket{0011}-\frac{1}{\sqrt{2}}\ket{1100}\pm\ket{0000}).
\end{array}
\end{equation}
Notice that the first state of the above equation is a standard GHZ state - so were the ones in (\ref{eq11}) - and the other two states are also genuinely entangled states. So, now there is only one state left to complete the basis, and that is $\ket{1111}$, a fully separable state. Clearly, the states in (\ref{eq9})-(\ref{eq12}) form a four-qubit UEB of maximum cardinality, which is also unextendible across every bipartition. If we can now show that the complete basis contains states from two SLOCC (stochastic LOCC) inequivalent classes, it will follow that the basis will correspond to a nonlocal operation, to implement which by LOCC, it requires an entangled resource state from a higher-dimensional Hilbert space. Following the above process, it is easy to construct multi-qubit UEBs when the number of qubits $\geq5$. Modifying the steps it is also possible to produce UEBs of different cardinalities. 
\vspace{1cm}

\noindent \textbf{The \(N\)-qubit GHZ and W states are SLOCC inequivalent:}
\vspace{0.5cm}

We are now left with proving that the above basis contains states that belong to at least two SLOCC inequivalent classes. We will therefore show that the \(N\)-qubit GHZ state  \(|GHZ_N\rangle=\frac{1}{\sqrt{2}}(|0^{\otimes N}\rangle + |1^{\otimes N}\rangle)\)  and the \(N\)-qubit W state \(|W_N\rangle = \frac{1}{\sqrt{N}}(\sum |0^{\otimes (N-1)}1\rangle)\)  are SLOCC inequivalent. This result is well-known in the community, but we provide a proof of it for completeness. The proof directly follows from the arguments in Refs. \cite{Dur00, Sanpera98}. Let the parties sharing the \(N\)-qubit state be named as \(A_1\), \(A_2\), ..., \(A_N\). It was proven in Ref. \cite{Dur00} that any state that is SLOCC equivalent to the \(N\)-qubit GHZ state \(|GHZ_N\rangle\) can be expressed as \(|a_1\rangle_{A_1} |a_2\rangle_{A_2} \ldots |a_N\rangle_{A_N} + |b_1\rangle_{A_1} |b_2\rangle_{A_2} \ldots |b_N\rangle_{A_N}\), where \(|a_i\rangle\) and \(|b_i\rangle\) are vectors of the qubit Hilbert space associated with the system \(A_i\), with \(i = 1, 2, \ldots, N\). Suppose now that the state \(|W_N\rangle\) can be expressed as \(|a_1\rangle_{A_1} |a_2\rangle_{A_2} \ldots |a_N\rangle_{A_N} + |b_1\rangle_{A_1} |b_2\rangle_{A_2} \ldots |b_N\rangle_{A_N}\), for some vectors \(|a_i\rangle\) and \(|b_i\rangle\) of the qubit Hilbert space associated with the system \(A_i\), with \(i = 1, 2, \ldots, N\). Then, \(|a_1\rangle|a_2\rangle\) and \(|b_1\rangle |b_2\rangle\) will span \(R(\rho^W_{A_1A_2})\), the range of the local density matrix of \(|W_N\rangle\) after tracing out all parties except \(A_1\) and \(A_2\). Since the ranks of the local density matrices of \(\rho^W_{A_1A_2}\), which are just the single-qubit local densities of the \(N\)-qubit W state, are two each, \(R(\rho^W_{A_1A_2})\), being a two-dimensional subspace of \(\mathbb{C}^2 \otimes \mathbb{C}^2\), will contain exactly two product states \cite{Sanpera98}. Now, \(\rho^W_{A_1A_2} = \mbox{tr}_{A_3 \ldots A_N} |W_N\rangle \langle W_N| = \frac{1}{N}(2|\psi^+\rangle \langle \psi^+| + (N-2)|00\rangle \langle 00|)\), where \(|\psi^+\rangle = \frac{1}{\sqrt{2}}(|01\rangle + |10\rangle)\). It is easy to show that an arbitrary superposition of the states \(|\psi^+\rangle\) and \(|00\rangle\) has only one product state, viz. \(|00\rangle\), and therefore this is the only product state in \(R(\rho^W_{A_1A_2})\). This contradicts the assumption that \(|W_N\rangle\) can be written as \(|a_1\rangle_{A_1} |a_2\rangle_{A_2} \ldots |a_N\rangle_{A_N} + |b_1\rangle_{A_1} |b_2\rangle_{A_2} \ldots |b_N\rangle_{A_N}\), proving that the \(N\)-qubit W state is not SLOCC equivalent to the \(N\)-qubit GHZ state \(|GHZ_N\rangle\).
\end{widetext}
\bibliography{ref}
\end{document}